\newtheorem{theorem}{\sc Theorem}
\newtheorem{lemma}{\sc Lemma}
\newtheorem{coro}{\sc Corollary}
\newtheorem{req}{\sc Requirement}
\newtheorem{nota}{\sc Notation}
\newtheorem{defin}{\sc Definition}
\newtheorem{rem}{\sc Remark}
\newtheorem{cla}{\sc Claim}
\newtheorem{ex}{\sc Example}
\newenvironment{remark}{\begin{rem}}{\hspace*{\fill}$\Diamond$\end{rem}}
\newenvironment{corollary}{\begin{coro}}{\end{coro}}
\newenvironment{definition}{\begin{defin}}{\end{defin}}
\title{Conditional Kolmogorov Complexity and Universal Probability}
\author{Paul M.B. Vit\'anyi\thanks{
CWI, Science Park,
1098 XG Amsterdam, The Netherlands. 
Email: {\tt Paul.Vitanyi@cwi.nl}.}
}
\date{}
\begin{document}
\maketitle

\begin{abstract}
The Coding Theorem of L.A. Levin connects 
unconditional prefix Kolmogorov complexity with
the discrete universal distribution. 
There are conditional versions referred to in several publications but as yet
there exist no written proofs in English. 
Here we provide those proofs. They use a different definition than the
standard one for the conditional
version of the discrete universal distribution.
Under the classic definition
of conditional probability, there is no conditional version of the Coding Theorem. 
\end{abstract}

\section{Introduction}
Informally, the Kolmogorov complexity, or algorithmic entropy, of a
string $x$ is the length (number of bits) of a shortest binary
program (string) to compute
$x$ on a fixed reference universal computer
(such as a particular universal Turing machine).
Intuitively, this quantity represents the minimal amount of information
required to generate $x$ by any effective process.
The conditional Kolmogorov complexity of $x$ relative to
$y$ is defined similarly as the length of a shortest binary program
to compute $x$, if $y$ is furnished as an auxiliary input to the
computation \cite{Ko65}.

The Coding Theorem \eqref{eq.codth} of L.A. Levin \cite{Le74} 
connects a variant of Kolmogorov complexity,
the unconditional prefix Kolmogorov complexity, with
the discrete universal distribution. 
The negative logarithm of the latter
is up to a constant equal to the former. 
The conditional in conditional Kolmogorov complexity commonly is taken
to be a finite binary string. 

A conditional version of the Coding Theorem as 
referred to in \cite{Ga74,Le76,LV08,Ga10,SUV13} 
requires a function denoted as ${\bf m}(x|y)$ with $x,y \in \{0,1\}^*$ 
that is (i) lower semicomputable;
(ii) satisfies  $\sum_x {\bf m}(x|y) \leq 1$ for every $y$; (iii) if $p(x|y)$ is a
function satisfying (i) and (ii) then there is a constant $c$
such that $c {\bf m}(x|y) \geq p(x|y)$ for all $x$ and $y$.
There is no written complete proof of the conditional version of the Coding Theorem.
Our aim is to provide such a proof and write it out in detail rather than
rely on ``clearly'' or ``obviously.'' One wants to be certain that 
applications of the conditional version of the Coding Theorem 
are well founded. 

Since
the discrete universal distribution ${\bf m}$ over one variable is a semiprobability mass
function, that is $\sum_x {\bf m}(x) \leq 1$, it is
natural to consider a universal distribution ${\bf m}(x,y)$ over two variables
with $\sum_{x,y}{\bf m}(x,y) \leq 1$. One then can
define the conditional version following the custom in probability
theory, for example \cite{Sh48}, 
\begin{equation}\label{eq.mxy1}
{\bf m} (x|y) = \frac{{\bf m}(x,y)}{\sum_z {\bf m}(z,y)}.
\end{equation}
But in \cite{Ga74,Le76,LV08,Ga10,SUV13}
the conditional semiprobability 
${\bf m} (x|y)$ is defined differently, namely as 
in Definition~\ref{def.mxy2}. 
In Theorem~\ref{theo.single} for a single distribution, and in 
Theorem~\ref{theo.nocoding} for a joint distribution, it is shown that if one uses
\eqref{eq.mxy1} then ${\bf m} (x|y)$ does not satisfy a Coding
Theorem. In contrast, if ${\bf m} (x|y)$ is defined according
to Definition~\ref{def.mxy2} it does have a Coding Theorem, Theorem~\ref{PR2}. 

The necessary notions
and concepts are given in Appendices: 
Appendix~\ref{app.codes} introduces
prefix codes, Appendix~\ref{app.kolm} introduces
Kolmogorov complexity, Appendix~\ref{app.complexity} introduces complexity 
notions, and Appendix~\ref{app.precision} tells about our use of $O(1)$.

\subsection{related work}
We can enumerate all lower semicomputable probability mass functions
with one argument. 
For convenience these arguments are elements of $\{0,1\}^*$.
The enumeration list is denoted
\[
{\cal P} = P_1, P_2, \ldots .
\] 
There is another interpretation possible. Let 
prefix Turing machine $T_i$ be the $i$th element in the standard
enumeration of prefix Turing machines $T_1, T_2, \ldots .$
Then $R_i(x)= \sum 2^{-|p|}$ where $p$ is a program for $T_i$ such that
$T_i(p)=x$. This $R_i(x)$
is the probability that prefix Turing machine $T_i$ outputs $x$
when the program on its input tape is supplied by flips of a fair coin.
We can thus form the list
\[
{\cal R} = R_1, R_2, \ldots .
\]
Both lists ${\cal P}$ and ${\cal R}$ enumerate the same functions and there are
computable isomorphisms between the two \cite{LV08} Lemma 4.3.4.
\begin{definition}
\rm
If $U$ is the reference universal prefix Turing machine, then the corresponding
distribution in the $R$-list is $R_U$.
\end{definition}

L.A. Levin \cite{Le74} proved that
\begin{equation}\label{eq.muni}
{\bf m} (x) = \sum_j \alpha_j  P_j (x),
\end{equation}
with $\sum_j \alpha_j \leq 1$, $\alpha_j > 0$, and $\alpha_j$ lower semicomputable, 
is a universal lower semicomputable semiprobability mass function. 
(For semiprobabilities see Appendix~\ref{app.complexity}.) That is,
obviously it is lower semicomputable and $\sum_x {\bf m} (x) \leq 1$. 
It is called a {\em universal} lower semicomputable semiprobability mass function
since (i) it is itself a lower semicomputable semiprobability mass function and
(ii) it multiplicatively (with factor $ \alpha_j$) dominates every 
lower semicomputable semiprobability mass function $P_j$.

Moreover, he proved the Coding Theorem
\begin{equation}\label{eq.codth}
- \log {\bf m}(x) = - \log R_U(x) =  K(x),
\end{equation}
where equality holds up to a constant additive term.

\subsection{Results} 
We give a review of the classical definition of conditional probability
versus the one used in the case of semicomputable probability. 
In Sections~\ref{sect.single} and \ref{sect.joint} we show that 
the conditional version of \eqref{eq.codth} do not hold for
the classic definition of conditional probability in the case of a
single probability distribution
(Theorem~\ref{theo.single}) and for joint
distributions (Theorem~\ref{theo.nocoding}).
In Section~\ref{sect.cond} we consider the
Definition~\ref{def.mxy2} 
of the conditional version of joint semicomputable semiprobability mass functions
as used in \cite{Ga74,Le76,LV08,Ga10,SUV13}. For this definition the conditional
version of \eqref{eq.codth} holds.
We write all proofs out in complete detail. 

\section{Preliminaries}\label{sect.prel}
Let $x,y,z \in {\cal N}$, where
${\cal N}$ denotes the natural
numbers and we identify
${\cal N}$ and $\{0,1\}^*$ according to the
correspondence
\[(0, \epsilon ), (1,0), (2,1), (3,00), (4,01), \ldots \]
Here $\epsilon$ denotes the {\em empty word}.
A {\em string} $x$ is an element of $\{0,1\}^*$.
The {\em length} $|x|$ of $x$ is the number of bits
in $x$, not to be confused with 
the absolute value of a number. Thus,
$|010|=3$ and $|\epsilon|=0$, while 
$|-3|=|3|=3$.

The emphasis is on binary sequences only for convenience;
observations in any alphabet can be so encoded in a way
that is `theory neutral'.
Below we will use the natural numbers and the binary strings
interchangeably.

\section{Conditional Probability}\label{sect.single}
Let $P$ be a probability mass function on sample space ${\cal N}$, that is, 
$\sum P(x)=1$ where the summation is over ${\cal N}$. Suppose we consider 
$x \in {\cal N}$ and event $B \subseteq {\cal N}$ has occurred. 
According to Kolmogorov in \cite{Ko33} 
a new probability $P(x|B)$
has arisen satisfying:
\begin{enumerate}
\item
$x \not\in B$: $P(x|B)=0$;
\item 
$x \in B$: $P(x|B) =  P(x)/P(B)$;
\item
$\sum_{x \in B} P(x|B)= 1$.
\end{enumerate}
Let ${\bf m}$ be as defined in \eqref{eq.muni} with the sample space ${\cal N}$. 
Then $\sum {\bf m}(x) \leq 1$ and we call 
${\bf m}$ a {\em semiprobability}. For the conditional versions of
semiprobabilities Items 1) an 2) 
above hold and Item 3) holds with $\leq$. 
We show that in with these definitions there is no conditional Coding Theorem.
\begin{theorem}\label{theo.single}
Let $B \subseteq {\cal N}$ and $|B| \leq \infty$. 
Then 
$- \log {\bf m}(x|B) \neq K(x|B) +O(1)$. 
\end{theorem}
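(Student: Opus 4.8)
The plan is to refute the claim that a single additive constant can bridge $-\log {\bf m}(x|B)$ and $K(x|B)$ by exhibiting a family of pairs $(B,x)$ on which the two quantities drift apart without bound. First I would unfold the definitions: for $x \in B$ the classical rule gives ${\bf m}(x|B) = {\bf m}(x)/{\bf m}(B)$ with ${\bf m}(B) = \sum_{z \in B} {\bf m}(z)$, so by the Coding Theorem \eqref{eq.codth} one obtains the identity
\[
-\log {\bf m}(x|B) = K(x) + \log {\bf m}(B) + O(1).
\]
The right-hand side subtracts only the global weight $-\log {\bf m}(B)$, which in general has nothing to do with how much information $B$ actually reveals about $x$; this mismatch is what I would exploit.

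Concretely, I would take $B = \{0,x\}$ with $x \neq 0$ chosen of large complexity. Since $K(0) = O(1)$, the Coding Theorem gives ${\bf m}(0) = \Theta(1)$, hence $\Theta(1) \leq {\bf m}(B) = {\bf m}(0) + {\bf m}(x) \leq 1$, so $\log {\bf m}(B) = O(1)$ and therefore $-\log {\bf m}(x|B) = K(x) + O(1)$. On the other side, given the finite set $B$ as a condition, a fixed program recovers $x$ as the unique element different from $0$ (equivalently as $\max B$), so $K(x|B) = O(1)$ with a constant independent of $x$. The gap $-\log {\bf m}(x|B) - K(x|B) = K(x) + O(1)$ then grows without bound as $K(x) \to \infty$, so no fixed additive constant can close it, and the stated inequality $-\log {\bf m}(x|B) \neq K(x|B) + O(1)$ follows.

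The main obstacle I anticipate is not the algebra but pinning down the two endpoints uniformly. On the complexity side I must argue that conditioning on the finite object $B$ makes $x$ recoverable by one fixed machine, so that the $O(1)$ in $K(x|B) = O(1)$ is genuinely independent of $x$; this is where the chosen representation of the set $B$ as an input matters. On the probability side the delicate point is that the normalizer ${\bf m}(B)$ is dominated by the simple element $0$ and stays $\Theta(1)$, so dividing by ${\bf m}(B)$ fails to lower $-\log {\bf m}(x|B)$ even though $B$ already determines $x$ completely. Isolating this discrepancy, an information-rich condition set against an information-blind normalizer, is the crux of the argument.
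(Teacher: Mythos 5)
Your proof is correct, but it takes a genuinely different route from the paper's. The paper splits into two cases: for $x \notin B$ it notes the trivial gap $-\log {\bf m}(x|B) = \infty$ versus $K(x|B) < \infty$, and for $x \in B$ it replaces the event probability ${\bf m}(B)$ by ${\bf m}(\chi_B)$, the universal probability of the characteristic string of $B$, applies the Coding Theorem to both numerator and denominator to get $-\log {\bf m}(x|B) = K(x) - K(\chi_B)$, and then picks $B$ with a \emph{complex} characteristic string ($K(\chi_B) \geq |B|$) whose elements are all small numbers, so that $-\log {\bf m}(x|B) \leq -|B|/2$ is hugely negative while $K(x|B) \geq 0$. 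You push the gap in the opposite direction: you keep the normalizer $\Theta(1)$ via the simple element $0$ and make the element $x$ complex, so that $-\log {\bf m}(x|B) = K(x) + O(1)$ is unboundedly large while $K(x|B) = O(1)$. A notable difference is that you interpret ${\bf m}(B)$ faithfully as the event measure $\sum_{z \in B} {\bf m}(z)$, exactly as in the Kolmogorov-style definition the paper states at the start of Section~\ref{sect.single}; under that interpretation ${\bf m}(B) \geq {\bf m}(x)$ for $x \in B$, so ${\bf m}(x|B) \leq 1$ and the paper's negative values could never arise --- the paper's argument depends essentially on its identification of ${\bf m}(B)$ with ${\bf m}(\chi_B)$, which is a different quantity. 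Your counterexample is thus more robust to this definitional issue (and also dispenses with the $x \notin B$ case, which is not needed since a single family with unbounded gap refutes any $O(1)$ relation); what the paper's version buys is the more dramatic conclusion that, under its reading of the denominator, the ``conditional semiprobability'' can even exceed $1$ exponentially.
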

\begin{proof}
($x \not\in B$) 
This implies 
${\bf m}(x|B) = 0$ and therefore $- \log {\bf m}(x|B) = \infty$. 
But $K(x|B) < \infty$.

($x \in B$) 
We can replace $B$ by its
characteristic string: $|\chi_{B}|=|B|$ and $\chi_{B}$ is defined by
$\chi_{B}(i)=1$ if $i \in B$ and $\chi_{B}(i)=0$ otherwise. 
Rewrite the conditional 
\[
{\bf m}(x|B) = 
 \frac{{\bf m}(x)}{{\bf m}(B)}
 =\frac{{\bf m}(x)}{{\bf m}(\chi_{B})}.
\]
Then, applying the Coding Theorem on the single argument numerator and denominatir of
the right-hand side,
\[
-\log {\bf m}(x|B) 
= K(x) - K(\chi_{B}).
\]
Let $K(\chi_{B}) \geq |B|$. For every $x \in B$ we have
$K(x) \leq \log |B| + O(\log \log |B|)$. Then, $-\log {\bf m}(x|B) \leq -|B|/2$. 
But for every $x$ and $B$ we have $K(x|B) \geq 0$.
\end{proof}

\section{Lower Semicomputable Joint Probability}\label{sect.joint}
We show that there is no equivalent of the Coding Theorem for 
the conditional version of ${\bf m}$ according to \eqref{eq.mxy1}
 based on lower semicomputable 
joint probability mass functions.
We use a standard pairing function $\langle \cdot , \cdot \rangle$
to obtain two-argument (joint) lower semicomputable probability
mass functions from the single argument ones. For example,
$\langle i , j \rangle = \frac{1}{2} (i+j)(i+j+1)+j$.

\begin{definition}
\rm
Let $x,y \in {\cal N}$ and
$f(\langle x,y \rangle )$ be a lower semicomputable function on a single
argument such that 
we have $\sum_{\langle x,y \rangle} f(\langle x,y \rangle) \leq 1$. 
We use these functions $f$ to define the
{\em lower semicomputable joint semiprobability mass
functions} $Q_j(x,y) = f(\langle x,y \rangle )$.
\end{definition}
Let us define the list
\[
{\cal Q}=Q_1,Q_2, \ldots . 
\]
We can effectively enumerate
the family of lower semicomputable joint semiprobability mass functions
as before by ${\cal Q}$.
We can now define the {\em lower semicomputable joint universal probability} by
\begin{equation}\label{eq.mxy}
{\bf m} (x,y) = \sum_j  \alpha_j  Q_j ( x,y),
\end{equation}
with $\sum_j \alpha_j \leq 1$. Classically,
for a joint probability mass function $P(x,y)$ with $x,y \in {\cal N}$
and $\sum_{x,y} P(x,y)=1$ one defines the conditional version \cite{CT91} by
\[
P(x|y) = \frac{P(x,y)}{\sum_z P(z,y)}.
\]
We call $P_1(x)=\sum_z P(x,z)$ and  $P_2(y)=\sum_z P(z,y)$ the 
{\em marginal} probability of $x$ and $y$, respectively.
This form of conditional $P(x|y)$ corresponds with $P(x|B)$ 
in Section~\ref{sect.single}  in that $B=\{(z,y):z \in {\cal N}\}$.
The semiprobability ${\bf m}$ in \eqref{eq.mxy1} satisfies $\sum_{x,y}{\bf m}(x,y) \leq 1$
and the analogue of the above yields
\begin{definition}
\label{def.mxy3}
\rm
The {\em conditional} version of ${\bf m}(x,y)$ is
defined by
\begin{align*}
{\bf m} (x|y)& = \frac{{\bf m}(x,y)}{\sum_z {\bf m}(z,y)}
\\&= \frac{ \sum_j \alpha_j Q_j (x,y)}
{\sum_z\sum_j \alpha_j Q_j (z,y)}
\\& =\frac{\sum_j \alpha_j   Q_j (x,y)}
{\sum_j \alpha_j \sum_z Q_j (z,y)}
\end{align*}
\end{definition}

This {\em conditional} version ${\bf m}(x|y)$ is the quotient of 
two lower semicomputable functions. 
It may not be
semicomputable (not proved here).
We show that there is
no conditional coding theorem for this version of ${\bf m}(x|y)$.

\begin{theorem}\label{theo.nocoding}
Let $x,y \in {\cal N}$. Then, 
$- \log {\bf m}(x|y) \geq K(x|y) +  O(|y|)$.
The $O(|y|)$ term in general cannot be improved.
\end{theorem}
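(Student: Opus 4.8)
The plan is to reduce $-\log {\bf m}(x|y)$ to a difference of two \emph{unconditional} complexities and then compare that difference with $K(x|y)$ via the chain rule. For the numerator of Definition~\ref{def.mxy3}, note that through the pairing $\langle\cdot,\cdot\rangle$ the family ${\cal Q}$ is exactly the family of single-argument lower semicomputable semiprobabilities, so ${\bf m}(x,y)$ coincides up to a constant factor with ${\bf m}(\langle x,y\rangle)$ and the Coding Theorem \eqref{eq.codth} gives $-\log {\bf m}(x,y)=K(x,y)+O(1)$. The real work is the denominator $\sum_z {\bf m}(z,y)$, which I claim satisfies $\sum_z {\bf m}(z,y)=\Theta({\bf m}(y))$. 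The upper bound $\sum_z {\bf m}(z,y)\le c\,{\bf m}(y)$ holds because the marginal is itself a lower semicomputable semiprobability in $y$ and is therefore multiplicatively dominated by the universal ${\bf m}(y)$. For the lower bound I would exhibit the joint semiprobability $Q$ that puts all its mass on the first coordinate $\epsilon$, i.e. $Q(\epsilon,y)={\bf m}(y)$ and $Q(z,y)=0$ otherwise; this $Q$ is lower semicomputable and sums to $\sum_y {\bf m}(y)\le 1$, so universality of \eqref{eq.mxy} forces ${\bf m}(\epsilon,y)=\Omega({\bf m}(y))$ and hence $\sum_z {\bf m}(z,y)=\Omega({\bf m}(y))$. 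Thus $-\log\sum_z {\bf m}(z,y)=K(y)+O(1)$, and combining the two estimates yields the clean identity $-\log {\bf m}(x|y)=K(x,y)-K(y)+O(1)$.

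Given this identity the asserted inequality follows from the chain rule $K(x,y)=K(y)+K(x\mid y^{*})+O(1)$, where $y^{*}$ is a shortest program for $y$, so that $-\log {\bf m}(x|y)=K(x\mid y^{*})+O(1)$. Since $y$ is recoverable from $y^{*}$ we have $K(x\mid y^{*})\le K(x|y)+O(1)$, and conversely $K(x|y)\le K(x\mid y^{*})+K(y^{*}\mid y)+O(1)$ with $K(y^{*}\mid y)\le |y^{*}|+O(1)=K(y)+O(1)=O(|y|)$. The latter gives $K(x\mid y^{*})\ge K(x|y)-O(|y|)$, whence $-\log {\bf m}(x|y)\ge K(x|y)+O(|y|)$, exactly as stated.

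For the second assertion I would produce pairs $x,y$ on which $-\log {\bf m}(x|y)$ really falls below $K(x|y)$ by a growing amount, so that no $O(1)$ coding theorem can hold. By the identity above this deficit equals $K(x|y)-K(x\mid y^{*})+O(1)$, i.e. it is precisely the discrepancy between conditioning on $y$ and on $y^{*}$. I would invoke the known fact that $K(K(y)\mid y)$ can be of order $\log|y|-O(\log\log|y|)$ for infinitely many $y$, and take $x$ to encode $K(y)$ itself: then $K(x\mid y^{*})=O(1)$ because $K(y)=|y^{*}|$ is read off $y^{*}$, while $K(x|y)=K(K(y)\mid y)$ is $\Omega(\log|y|)$. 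For such pairs $-\log {\bf m}(x|y)=O(1)$ but $K(x|y)=\Omega(\log|y|)$, so the gap is unbounded in $|y|$ and the Coding Theorem cannot hold with an additive constant.

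The main obstacle is exactly this last step, and it is where I expect the stated $O(|y|)$ to come under pressure. The identity $-\log {\bf m}(x|y)=K(x\mid y^{*})+O(1)$ pins the entire discrepancy to $K(x|y)-K(x\mid y^{*})$, which is bounded above by $K(K(y)\mid y)+O(1)=O(\log|y|)$; a genuine $\Theta(|y|)$ separation therefore cannot occur, and the honest tight statement is $-\log {\bf m}(x|y)=K(x|y)+O(\log|y|)$, with the $O(\log|y|)$ (rather than $O(|y|)$) being the term that cannot be reduced to $O(1)$. I would accordingly either sharpen the conclusion to $O(\log|y|)$, or, to keep the stated bound verbatim, read ``cannot be improved'' as ``cannot be improved to $O(1)$,'' which the example above already establishes.
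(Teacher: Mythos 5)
Your proposal is correct and takes essentially the same route as the paper's proof: the Coding Theorem applied to the joint ${\bf m}(x,y)$, the estimate $\sum_z {\bf m}(z,y) \geq {\bf m}(\epsilon,y) = 2^{-K(y)+O(1)}$ for the denominator, the symmetry of information $K(x,y)=K(y)+K(x\mid y^*)+O(1)$, and G\'acs's lower bound on $K(K(y)\mid y)$ for the unimprovability claim. Your closing observation is moreover consistent with the paper itself: its proof actually derives the bound with an $O(\log |y|)$ term (and cites G\'acs for an $\Omega(\log|y|)$ separation), so the $O(|y|)$ appearing in the theorem statement is loose, and your sharpened reading $-\log {\bf m}(x|y)=K(x|y)+O(\log|y|)$, with $O(\log|y|)$ not improvable to $O(1)$, is the honest form of the result.
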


\begin{proof}
By \eqref{eq.mxy} and the Coding Theorem 
we have $-\log {\bf m}(x,y) = K(\langle x,y \rangle)+O(1)$. 
Clearly, $K(\langle x,y \rangle) = K(x,y)+O(1)$.
The marginal universal probability ${\bf m}_2(y)$ is given
by ${\bf m}_2(y)= \sum_z {\bf m}(z,y) \geq {\bf m}(\epsilon , y)$. 
Thus, with the last equality due to the Coding Theorem:
$-\log {\bf m}_2(y) \leq -\log {\bf m}(\epsilon , y)
=K(\langle \epsilon,y \rangle)+O(1)=K(y)+O(1)$.
By the Symmetry of Information \eqref{eq.soi} we find
$K(x,y)=K(y)+K(x|y,K(y))+O(1)$.  Here $K(x|y,K(y))=K(x|y)+O(\log |y|)$. Since
${\bf m}(x|y) = {\bf m}(x,y)/{\bf m}_2(y)$ by Definition~\ref{def.mxy2},
we have $- \log {\bf m}(x|y) = - \log {\bf m}(x,y) + \log {\bf m}_2(y) \geq 
 - \log {\bf m}(x,y) + \log {\bf m}(\langle \epsilon,y \rangle) =
K(x|y) + O (\log(|y|)$. Here the first inequality follows
from the relation between ${\bf m}_2(y)$ and ${\bf m}(\langle \epsilon,y\rangle)$, while the last
equality follows from \eqref{eq.soi}.
In \cite{Ga74} it is shown that for every length of 
the binary representation of $y \in {\cal N}$ 
there are $y$ such
that $K(x|y,K(y)) = K(x|y)+ \Omega(\log |y|)$.
\end{proof}

\section{Lower Semicomputable Conditional Probablity}\label{sect.cond}
We consider lower semicomputable conditional semiprobabilities directly
in order to obtain 
a conditional semiprobability that (i) is lower semicomputable itself,
and (ii) dominates multiplicatively every lower semicomputable
conditional semiprobability.
Let $f(x,y)$ be a lower semicomputable function.
We use these functions $f$
to define
{\em lower semicomputable conditional semiprobability mass
functions} $P(x|y)$.

\begin{theorem}
There is a universal conditional 
lower semicomputable
semiprobability mass function. We denote it by ${\bf m}$.
\label{PR1}
\end{theorem}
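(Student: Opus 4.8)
The plan is to construct the universal conditional semiprobability by an explicit weighted sum over an effective enumeration of all lower semicomputable conditional semiprobabilities, mimicking Levin's construction in \eqref{eq.muni} but now respecting the conditioning variable $y$. First I would fix what ``lower semicomputable conditional semiprobability mass function'' means precisely: a function $P(x|y)$ that is lower semicomputable in both arguments and satisfies $\sum_x P(x|y) \leq 1$ for \emph{every} fixed $y$. The key point — and the reason this is the right definition to aim for — is that the normalization constraint is imposed separately for each conditional $y$, rather than jointly as in \eqref{eq.mxy1}. This is exactly the feature that Definition~\ref{def.mxy2} will exploit to recover a Coding Theorem in Theorem~\ref{PR2}.

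The main steps are as follows. First I would establish that one can effectively enumerate a list $P_1(x|y), P_2(x|y), \ldots$ of lower semicomputable functions of the pair $(x,y)$ that includes every lower semicomputable conditional semiprobability. This follows from the standard enumeration of all lower semicomputable functions of a single (paired) argument, as already invoked for the lists ${\cal P}$ and ${\cal Q}$ earlier in the paper. Second — and this is the crucial technical maneuver — each raw enumerated function $f_j(x,y)$ need not satisfy $\sum_x f_j(x|y) \leq 1$ for every $y$, so I would pass each $f_j$ through an effective truncation that forces the per-$y$ constraint: approximate $f_j$ from below by its rational lower-semicomputable approximations, and whenever adding an increment would push the running partial sum $\sum_x f_j(x|y)$ above $1$ for the relevant $y$, simply discard that increment. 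This yields a modified $\tilde P_j(x|y)$ that is still lower semicomputable, agrees with $f_j$ whenever $f_j$ was already a valid conditional semiprobability, and satisfies $\sum_x \tilde P_j(x|y) \leq 1$ for every $y$. Third I would define
\[
{\bf m}(x|y) = \sum_j \alpha_j \tilde P_j(x|y),
\]
with $\alpha_j > 0$, $\sum_j \alpha_j \leq 1$, and each $\alpha_j$ lower semicomputable (for instance $\alpha_j = 2^{-j}$).

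It then remains to verify the three required properties. Lower semicomputability of ${\bf m}(x|y)$ follows because it is an effective sum of lower semicomputable functions with lower semicomputable positive weights. The per-$y$ semiprobability bound follows by interchanging summations: for each fixed $y$,
\[
\sum_x {\bf m}(x|y) = \sum_j \alpha_j \sum_x \tilde P_j(x|y) \leq \sum_j \alpha_j \leq 1.
\]
Multiplicative domination is immediate: for any lower semicomputable conditional semiprobability $P(x|y)$, it appears in the enumeration as some $\tilde P_{j_0}$ (unchanged by the truncation, since it already satisfied the constraint), so ${\bf m}(x|y) \geq \alpha_{j_0} P(x|y)$ with $c = 1/\alpha_{j_0}$ serving as the domination constant. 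The step I expect to be the main obstacle is the truncation in the second step: one must check carefully that clamping the partial sums at $1$ \emph{separately for each $y$} can be carried out effectively and simultaneously across the infinitely many values of $y$ while preserving lower semicomputability, and that it leaves genuine conditional semiprobabilities untouched so that domination is not lost. This is precisely where the present construction diverges from the joint construction \eqref{eq.mxy}, and getting it right is what makes the later Coding Theorem possible.
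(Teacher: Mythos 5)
Your proposal is correct and takes essentially the same route as the paper's proof: both enumerate all lower semicomputable two-argument functions, effectively trim each one so that $\sum_x P_j(x|y) \leq 1$ holds for every fixed $y$ while leaving functions that already satisfy this constraint unchanged, and then form the mixture $\sum_j \alpha_j P_j(x|y)$ with positive, lower semicomputable weights summing to at most $1$, from which universality follows exactly as you argue. The only (inessential) difference is the trimming mechanism: the paper's algorithm freezes the current approximation of $P_j$ permanently once some finite partial sum over the first argument exceeds $1$, whereas you discard the individual offending increments and continue; both preserve lower semicomputability, enforce the per-$y$ bound, and leave genuine conditional semiprobabilities untouched.
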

\begin{proof}
We prove the theorem in two steps. In Stage 1 we show that the
two-argument lower semicomputable functions which sum over the first argument
to at most 1 can be effectively enumerated as
$$
P_1 , P_2 ,  \ldots \;.
$$
This enumeration contains all and only lower 
semicomputable conditional semiprobability
mass functions.
In Stage 2 we show that $P_0$ as defined below multiplicatively
dominates all $P_j$:
$$
P_0 (x|y) = \sum_j \alpha_j P_j (x|y),
$$
with $\sum \alpha_j \leq  1$, and $\alpha_j > 0$ and lower semicomputable.
Stage 1 consists of two parts. In the first part, we
enumerate all lower semicomputable two argument functions; 
and in the second part
we effectively change the lower semicomputable two argument functions
to functions that sum to at most 1 over the first argument.
Such functions 
leave the functions that were
already conditional lower semicomputable   
semiprobability mass functions unchanged.

{\sc Stage 1}
Let $\psi_1 , \psi_2 , \ldots $ be an effective enumeration
of all two-argument real-valued partial recursive functions. For example,
let $\psi_1 (x,y) , \psi_2 (x,y) , \ldots $ be $\psi_1 (\langle x,y \rangle) , 
\psi_2 (\langle x,y \rangle) , \ldots $ with $\langle \cdot, \cdot \rangle$
the standard pairing function over the natural numbers. Consider
a function $\psi$ from this enumeration
(where we drop the subscript for
notational convenience). Without loss of generality,
assume that each $\psi$ is approximated
by a rational-valued three-argument partial recursive function
$\phi ' (x,y, k) = p/q$
(use
$\phi ' (\langle \langle  x, y \rangle , k \rangle ) = \langle p, q \rangle $).
Without loss of generality, each such $\phi'$ is
modified to a partial recursive function
satisfying the properties below.
For all $x,y,k  \in  {\cal N}$,
\begin{itemize}
\item
if $\phi  (x,y, k) < \infty$,
then also $\phi (x,y, 1) , \phi (x,y, 2) , \ldots  , \phi (x,y, k-1) < \infty$
(this can be achieved by the trick of dovetailing the
computation of $\phi ' (\langle \langle x,y\rangle, 1 \rangle)$, 
$\phi ' (\langle \langle x,y\rangle,2 \rangle)$, $\ldots$
and assigning computed values in enumeration order of halting to
$\phi  (x,y, 1) , \phi (x,y, 2) , \ldots$);
\item
$\phi  (x,y, k + 1)  \geq  \phi (x,y, k)$ (dovetail the
computation of $\phi ' (x,y,1)$, $\phi ' (x,y,2)$, $\ldots$
and assign the enumerated values
to $\phi  (x,y, 1) , \phi (x,y, 2) , \ldots$ satisfying this requirement
and ignoring the other computed values); and
\item
$\lim_{{k}  \rightarrow  \infty} \phi (x,y, k) = \psi (x,y)$
(as does $\phi'$).
\end{itemize}
The resulting $\psi$-list contains all
lower semicomputable two-argument real-valued functions, and is
represented by the approximators in the
$\phi$-list.
Each lower semicomputable function $\psi$ (rather, the
approximating function $\phi$) will be used to construct
a function $P$ that sums to at most 1 over the first argument.
In the algorithm below, the local
variable array
$P$ contains the current
approximation to the values of
$P$ at each stage of the computation.
This is doable because the nonzero part of the approximation
is always finite.
\begin{description}
\item[{\bf Step 1:}]$\;\;\;$
Initialize by setting
$P (x|y) := 0$ for all $x,y  \in  {\cal N}$; and set $k := 0$.
\item[{\bf Step 2:}]$\;\;\;$
Set $k := k + 1$, and compute $\phi (1,1, k) , \ldots , \phi (k,k, k)$.
{\bf \{}If any $\phi (i,j, k)$, $1  \leq  i,j  \leq  k$, is undefined, then
the existing values of $P$ do not change.{\bf \}} 
\item[{\bf Step 3:}]$\;\;$
{\bf if} for some $j$ ($1 \leq j \leq k$) we have 
$\phi (1,j, k) + \cdots + \phi (k,j, k)  >  1$  {\bf then}
the existing values of $P$ do not change {\bf else}
{\bf for} $i,j:=1, \ldots,k$ set $P (i|j) := \phi (i,j, k)$ 
{\bf \{}Step 3 is a test of
whether the new assignment of $P$-values satisfy (also future)
lower semicomputable conditional semiprobability 
mass function requirements{\bf \}}
\item[{\bf Step 4:}]$\;\;\;$
{\bf Go to} Step 2.
\end{description}
If $\psi (x,y)$ satisfies $\sum_x \psi(x,y) \leq 1$ for all $x,y \in {\cal N}$ 
then $P(x|y) = \psi(x,y)$ for all $x,y \in {\cal N}$.
If for some $x,y$ and $k$ with $x,y  \leq  k$ the value $\phi( x,y, k)$ is
undefined, then the last assigned values of
$P$ do not change any more
even though the computation
goes on forever. If the {\bf else} condition in Step 3 is 
satisfied in the limit with equality by the values of $P$, 
it is a conditional semiprobability mass function.
If {\bf if} condition in Step 3 gets satisfied, then the computation
terminates and $P$'s support is finite and it is computable. 

Executing this procedure on all functions
in the list $\phi_1 , \phi_2 , \ldots $ yields an effective
enumeration $P_1 ,  P_2 , \ldots $ of lower semicomputable functions 
containing all and only
lower semicomputable conditional semiprobability mass functions.
The algorithm takes care that for all $j \geq 1$ we have
\[
\sum_x P_j(x|y) \leq 1.
\]
{\sc Stage 2}
Define the function $P_0$ as
$$
P_0 (x|y)  = \sum_j \alpha_j   P_j (x|y),
$$
with $\alpha_j$ chosen such that
$\sum_j  \alpha_j \leq  1$, $\alpha_j > 0$ and
lower semicomputable for all $j$.
Then $P_0$ is a conditional semiprobability mass function since
\begin{eqnarray*}
\sum_x P_0 (x|y)  =  \sum_j \alpha_j 
\sum_x  P_j (x|y) 
 \leq  \sum_j  \alpha_j  
\leq 1.
\end{eqnarray*}
The function $P_0(\cdot|\cdot)$ is also lower semicomputable,
since $P_j (x|y)$ is
lower semicomputable in $j$ and $x,y$. (Use the universal
partial recursive function $\phi_0$ and the construction above.)
Also $\alpha_j$ is by definition lower semicomputable for all $j$.
Finally, $P_0$ multiplicatively dominates each $P_j$ 
since for all $x,y \in {\cal N}$ we have
$P_0 (x|y)  \geq  \alpha_j  P_j (x|y)$ while $\alpha_j >0$. 
Therefore, $P_0$ is a universal lower semicomputable conditional 
semiprobability mass function.
\end{proof}
We can choose the $\alpha_j$'s in the definition of $P_0$ in the  proof above by setting 
\[
\alpha_j= 2^{- K(j)-c_j} ,
\]
with the $c_j \geq 0$ constants.
Then $\sum_j  \alpha_j  \leq  1$ by the ubiquitous 
Kraft inequality \cite{Kr49} (satisfied by the 
prefix complexity $K$), and $\alpha_j> 0$ and
lower semicomputable for all $j$.
\begin{definition}
\label{def.mxy2}
\rm
We define 
$$
{\bf m} (x|y) = \sum_{{j}  \geq  1}  2^{{-} K(j)-c_j}  P_j (x|y).
$$
We call ${\bf m}(x|y)$ the {\em reference universal lower semicomputable conditional
semiprobability mass function}.
\end{definition}
\begin{corollary}\label{PR1corollary}
\rm
If $P(x|y)$
is a lower semicomputable conditional semiprobability mass function, then
$2^{K(P)} {\bf m}(x|y) \geq P(x|y)$, for all $x,y$.
That is, ${\bf m}(x|y)$
multiplicatively dominates every lower semicomputable conditional 
semiprobability mass function $P(x|y)$.
\end{corollary}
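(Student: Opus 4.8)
The plan is to derive the corollary directly from the two ingredients it cites, namely the enumeration guaranteed by Theorem~\ref{PR1} and the explicit mixture form of ${\bf m}(x|y)$ in Definition~\ref{def.mxy2}. The conceptual point is that ${\bf m}(x|y)$ is a weighted sum that already contains $P$ itself as one of its components, so multiplicative domination should follow by discarding all of the other (nonnegative) terms and identifying the weight attached to $P$.

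First I would invoke Theorem~\ref{PR1}. Since $P(x|y)$ is by hypothesis a lower semicomputable conditional semiprobability mass function, it occurs in the effective enumeration $P_1,P_2,\ldots$ built in Stage~1: there is an index $j$ with $P_j(x|y)=P(x|y)$ for all $x,y\in{\cal N}$. Here I would stress that Stage~1 produces \emph{all and only} such functions and that any approximated $\psi$ already obeying $\sum_x\psi(x,y)\le 1$ is passed through unchanged, so $P$ appears verbatim in the list and not merely as a truncation.

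Second, I would read the lower bound off Definition~\ref{def.mxy2}. Because every summand $2^{-K(i)-c_i}P_i(x|y)$ is nonnegative, keeping only the term $i=j$ gives, for all $x,y$,
\[
{\bf m}(x|y)=\sum_{i\ge 1}2^{-K(i)-c_i}P_i(x|y)\ge 2^{-K(j)-c_j}P_j(x|y)=2^{-K(j)-c_j}P(x|y).
\]
This is already multiplicative domination, but with the index-dependent constant $2^{K(j)+c_j}$; the remaining task is to replace it by $2^{K(P)}$.

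The third step is the only one requiring care, and is where I expect the real content to sit: relating $K(j)+c_j$ to $K(P)$ up to the additive $O(1)$ that the precision convention of Appendix~\ref{app.precision} absorbs. I would take $j$ to be the index of least prefix complexity among those computing $P$. On one hand $K(P)\le K(j)+O(1)$, since $P$ is computable from its index by running the enumeration; on the other hand a shortest program for $P$ yields, through the Stage~1 construction (which leaves $P$ unchanged precisely because $P$ satisfies the summation constraint), an index computing $P$, whence $K(j)\le K(P)+O(1)$. Thus $K(j)=K(P)+O(1)$, and since each $c_j$ is a fixed constant, $K(j)+c_j=K(P)+O(1)$; under the $O(1)$ convention this domination constant is written $2^{K(P)}$. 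Substituting into the displayed inequality yields $2^{K(P)}{\bf m}(x|y)\ge P(x|y)$ for all $x,y$, as claimed. The main obstacle is therefore not the term-dropping inequality, which is routine, but the $K(j)\le K(P)+O(1)$ direction: one must check that a suitable index of $P$ can be extracted from a shortest program for $P$, so that the self-delimiting complexity of the index does not exceed that of the function itself.
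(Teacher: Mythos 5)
Your proposal is correct and follows essentially the argument the paper intends: the corollary is read off from Definition~\ref{def.mxy2} by discarding all terms of the mixture except the one with $P_j=P$, which gives ${\bf m}(x|y)\ge 2^{-K(j)-c_j}P(x|y)$, and the constant is then identified with $2^{K(P)}$ up to the usual $O(1)$ slack. Your extra care in step three (checking $K(j)=K(P)+O(1)$ via the effectiveness of the Stage~1 construction, with $c_j$ a fixed constant once $P$ is fixed) is exactly the detail the paper leaves implicit, so there is nothing to correct.
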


\subsection{A Priori Probability}

\label{sect.discrete.prior.probability}
Let $P_1 , P_2 , \ldots $ be the effective enumeration
of all lower semicomputable conditional 
semiprobability mass functions constructed
in Theorem~\ref{PR1}. There is another way to
effectively enumerate all lower semicomputable conditional
semiprobability mass functions.
Let the input to a
prefix machine $T$ (with the string $y$ on its auxiliary tape)
be provided
by an infinitely long sequence of fair coin flips.
The probability of generating an initial input segment $p$
is $2^{-|p|}$.
If $T(p,y) < \infty$, that is, $T$'s computation on $p$ with $y$ on its 
auxiliary tape
terminates, then presented with any infinitely
long sequence starting with
$p$, the machine $T$ with $y$ on its auxiliary tape, being a prefix machine,
will read exactly $p$ and no further.

Let $T_1 , T_2 , \ldots $ be the standard
enumeration of prefix machines in \cite{LV08}.
For each
prefix machine $T$, define
\begin{equation}\label{eq.univprob}
Q_T (x|y) = \sum_{{T(p,y)} = x}  2^{-|p|}.
\end{equation}
In other words, $Q_T (x|y)$ is the probability that $T$
with $y$ on its auxiliary tape computes output $x$ if its input is provided by
successive tosses of a fair coin.
This means that for every string $y$ we have that $Q_T$ satisfies
$$
\sum_{{x}  \in  {\cal N}}  Q_T (x|y)  \leq  1.
$$
We can approximate $Q_T(\cdot|y)$ for every string $y$  as follows. 
(The algorithm
uses the local variable $Q (x)$ to store the
current approximation to $Q_T (x|y)$.)
\begin{description}
\item[{\bf Step 1:}] $\;\;\;$
Fix $y \in \{0,1\}^*$. Initialize $Q (x) := 0$ for all $x$.
\item[{\bf Step 2:}] $\;\;\;$
Dovetail the running of all programs on $T$ with auxiliary $y$ so
that in stage $k$, step $k-j$ of program $j$ is executed.
Every time the computation of some program $p$
halts with output $x$, increment
$Q (x) := Q (x)+2^{-|p|}$.
\end{description}
The algorithm approximates the
displayed sum in Equation~\ref{eq.univprob}
by the contents of $Q (x)$. Since $Q(x)$ is nondecreasing,
this shows that $Q_T$ is lower semicomputable.
Starting from a standard enumeration of prefix machines
$T_1 ,T_2 ,\ldots$, this construction gives for every $y \in \{0,1\}^*$
an enumeration of only lower semicomputable conditional probability mass functions
$$
Q_1 (\cdot|y) , Q_2 (\cdot|y),  \ldots \;. 
$$
To merge the enumerations for different $y$ we use dovetailing over the index $i$ of
$Q_i$ and $y$.
The $P$-enumeration of Theorem~\ref{PR1} contains
all elements enumerated by this $Q$-enumeration.
In \cite{LV08} Lemma 4.3.4 the reverse is shown.
\begin{definition}\label{def.uapp}
\rm
The 
\it conditional universal a priori
probability
\rm
on the positive integers is defined as
$$
Q_U (x|y) = \sum_{{U(p,y)} = x}  2^{-|p|} ,
$$
where $U$ is the reference prefix machine.
\end{definition}
\begin{remark}
\rm
The use of prefix machines in the present discussion rather
than plain Turing machines is necessary.
By Kraft's inequality
the series $\sum_p  2^{-|p|}$ converges (to $\leq 1$) if the
summation is taken over all halting
programs $p$ of any fixed prefix machine with a fixed auxiliary input $y$.
In contrast, if
the summation is taken over all halting programs $p$
of a universal plain Turing machine, then
the series $\sum_p 2^{-|p|}$ diverges.
\end{remark}

\subsection{The Conditional Coding Theorem}

\begin{theorem}\label{PR2}
There is a constant $c$ such that for every $x$,
$$
 \log \frac{1}{ {\bf m} (x|y)} =  \log \frac{1}{Q_U (x|y)} = K(x|y),
$$
with equality up to an additive constant $c$.
\end{theorem}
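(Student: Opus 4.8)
The statement has two equalities to establish, both up to an additive constant: first that $-\log Q_U(x|y) = K(x|y) + O(1)$, and second that $-\log \mathbf{m}(x|y) = K(x|y) + O(1)$. Let me think about the structure.

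The target is to show three quantities coincide up to $O(1)$: $\log(1/\mathbf{m}(x|y))$, $\log(1/Q_U(x|y))$, and $K(x|y)$.

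Let me think about what tools I have. I have Theorem PR1 giving a universal lower semicomputable conditional semiprobability $\mathbf{m}(x|y)$ that dominates every such function. I have Corollary PR1corollary giving the domination with constant $2^{K(P)}$. I have Definition def.uapp defining $Q_U(x|y) = \sum_{U(p,y)=x} 2^{-|p|}$. And crucially, the text notes that the $P$-enumeration contains all $Q$-enumeration elements, and by LV08 Lemma 4.3.4 the reverse holds too — so the $Q$-list and $P$-list enumerate the same class.

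Now the standard unconditional Coding Theorem proof has this flavor: one direction is easy (the universal distribution dominates, and $2^{-K(x)}$ is a lower semicomputable semiprobability by Kraft, so $\mathbf{m}(x) \geq 2^{-K(x)}$ up to constant, giving $-\log\mathbf{m}(x) \leq K(x)+O(1)$). The hard direction builds a prefix code / machine from the distribution to show $K(x) \leq -\log\mathbf{m}(x)+O(1)$.

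So my plan has three pieces.

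**Piece 1: $K(x|y) \geq -\log Q_U(x|y) + O(1)$, equivalently $Q_U(x|y) \geq 2^{-K(x|y)}$ up to a constant.** This should be nearly immediate. By definition $K(x|y)$ is the length of a shortest program $p^*$ with $U(p^*,y)=x$ (up to the reference-machine convention). That single program contributes $2^{-|p^*|} = 2^{-K(x|y)}$ to the sum defining $Q_U(x|y)$, and since all terms are nonnegative, $Q_U(x|y) \geq 2^{-K(x|y)}$. Taking negative logs gives $-\log Q_U(x|y) \leq K(x|y)$, so $K(x|y) \geq -\log Q_U(x|y)$. I need to be careful that $K(x|y)$ is defined via the same reference machine $U$, which the appendices presumably guarantee.

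**Piece 2: $-\log \mathbf{m}(x|y) \leq K(x|y) + O(1)$, equivalently $\mathbf{m}(x|y) \geq 2^{-K(x|y)}$ up to a constant.** Here I would argue that $2^{-K(x|y)}$, as a function of $x$ with $y$ as parameter, is itself a lower semicomputable conditional semiprobability: it is lower semicomputable because $K(x|y)$ is upper semicomputable, and $\sum_x 2^{-K(x|y)} \leq 1$ for each fixed $y$ by the Kraft inequality applied to the prefix-free set of shortest conditional programs. Hence by the universality of $\mathbf{m}(x|y)$ from Theorem PR1 and Corollary PR1corollary, $\mathbf{m}(x|y) \geq c \cdot 2^{-K(x|y)}$ for a constant $c$ independent of $x,y$. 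Taking logs yields the inequality.

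**Piece 3: the reverse inequality $K(x|y) \leq -\log \mathbf{m}(x|y) + O(1)$ (equivalently $\leq -\log Q_U(x|y)+O(1)$), which is the genuinely hard direction.** The plan is to chain the two distributions. First, $\mathbf{m}(x|y)$ and $Q_U(x|y)$ coincide up to a multiplicative constant: both are universal over the same class (the $P$-list and $Q$-list enumerate the same functions, as the excerpt states via LV08 Lemma 4.3.4), and each dominates the other by universality, so $-\log\mathbf{m}(x|y) = -\log Q_U(x|y)+O(1)$. Then I reduce the remaining task to showing $K(x|y) \leq -\log Q_U(x|y) + O(1)$.

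For this last step I would run the classical KC-theorem / Kraft--Chaitin construction, but conditioned on $y$. The idea: from the lower semicomputable function $\mathbf{m}(x|y)$ I enumerate a list of requests $(\lceil -\log\mathbf{m}(x|y)\rceil + 1,\; x)$ as the approximation to $\mathbf{m}(x|y)$ grows, with $y$ supplied on the auxiliary tape. Because $\sum_x \mathbf{m}(x|y) \leq 1$ for each fixed $y$, the total requested weight $\sum 2^{-n_i} \leq 1$, so the Kraft--Chaitin theorem supplies a prefix machine that, with $y$ on its auxiliary input, produces $x$ from a program of length $\lceil -\log\mathbf{m}(x|y)\rceil + 1$. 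Invoking the definition of $K(\cdot|\cdot)$ relative to the reference machine $U$ then gives $K(x|y) \leq -\log\mathbf{m}(x|y) + O(1)$.

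**Where the difficulty lies.** The main obstacle is Piece 3, and specifically getting the conditioning right in the Kraft--Chaitin construction: the machine I build must read $y$ from its auxiliary tape uniformly across all $y$, the enumeration of requests must be effective jointly in $x$ and $y$, and the Kraft condition $\sum 2^{-n_i} \leq 1$ must hold for each fixed $y$ (which it does, precisely because Theorem PR1 guarantees $\sum_x \mathbf{m}(x|y)\leq 1$ per $y$ rather than over the joint argument — this is exactly the distinction that makes Definition def.mxy2, unlike Definition def.mxy3, admit a coding theorem). Assembling these pieces, the two inequalities from Pieces 1--2 combine with Piece 3 to pin down all three quantities to within $O(1)$ of one another, completing the theorem.
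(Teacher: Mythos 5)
Your proposal is correct, and its skeleton matches the paper's, but the decomposition and the tools differ enough to note. The paper closes a single cycle of three inequalities: (i) $2^{-K(x|y)} \leq Q_U(x|y)$ via the shortest program's contribution (identical to your Piece 1); (ii) $Q_U(x|y) = O(\mathbf{m}(x|y))$, because $Q_U$ is lower semicomputable and sums to at most $1$ over $x$ for each fixed $y$, hence is dominated by the universal $\mathbf{m}$ (one half of your Piece 3a); (iii) $\mathbf{m}(x|y) = O(2^{-K(x|y)})$, proved by an explicit Shannon--Fano interval-cutting construction: the approximation to $\mathbf{m}(x|y)$ is discretized into powers of two by an auxiliary function $\psi(x,y,t)$, intervals of length $\psi(x,y,t)/2$ are cut from $[0,1)$ for each fixed $y$, and the leftmost largest binary subinterval of $I_{x,y,t}$ is declared a program $p$ with $T(p,y)=x$, yielding $2^{-|p|}\geq \mathbf{m}(x|y)/8$, hence $K_T(x|y)\leq \log 1/\mathbf{m}(x|y)+3$, and invariance finishes. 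Your Piece 3b is exactly this step with the Kraft--Chaitin theorem invoked as a black box rather than constructed inline, and you correctly isolate the point on which everything hinges: the Kraft condition holds \emph{per fixed} $y$ because $\sum_x \mathbf{m}(x|y)\leq 1$ for each $y$, with the code assignment recomputable uniformly from $y$ on the auxiliary tape. Your other pieces are valid but redundant relative to the paper: Piece 2 (that $2^{-K(\cdot|\cdot)}$ is itself a lower semicomputable conditional semiprobability, by upper semicomputability of $K$ plus the per-$y$ Kraft inequality, hence dominated by $\mathbf{m}$) is a fact the paper never proves directly but gets as a consequence of its closed cycle; and the harder half of your Piece 3a (that $Q_U$ dominates $\mathbf{m}$) requires both the cited Lemma 4.3.4 of \cite{LV08} and the simulation-based universality of $Q_U$ among the machine distributions $Q_T$ --- machinery the paper deliberately avoids by routing that direction through $K$ instead. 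What your route buys is brevity and modularity; what the paper's buys is self-containment, with an explicit machine and an explicit constant before invariance, which is precisely the paper's stated purpose of writing the proof out in full detail.
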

\begin{proof}
Since $2^{-K(x|y)}$ represents the contribution to
$Q_U (x|y)$ by a shortest program for $x$ given the auxiliary $y$,
we have $2^{-K(x|y)}  \leq  Q_U (x|y)$, for all $x,y$.

Clearly, $Q_U (x|y)$ is lower semicomputable. Namely, 
enumerate all programs for $x$ given $y$, by running reference
machine $U$ on all programs with $y$ as auxiliary at once in dovetail fashion:
in the first phase, execute step 1 of program 1; in the second phase,
execute step 2 of program 1 and step 1 of program 2;
in the $i$th phase ($i>2$), execute step $j$ of program $k$ for all
positive $j$ and $k$ such that $j + k  = i$.
By the universality of ${\bf m} (x|y)$ in the class
of lower semicomputable conditional semiprobability mass functions,
$Q_U (x|y)  = O(  {\bf m} (x|y))$.

It remains to show that
$ {\bf m} (x|y) = O( 2^{-K(x|y)} )$.
This is equivalent to proving that
$K(x|y) \leq  \log 1/ {\bf m}  (x|y) + O(1)$, as follows.
Exhibit a prefix-code $E$ encoding each
source word $x$ given $y$ as a code word $E(x|y) = p$, satisfying
$$
|p|  \leq    \log \frac{ 1}{ {\bf m} (x|y)} + O(1),
$$
together with a decoding prefix machine $T$ such that
$T(p,y) = x$. Then,
$$
K_T (x|y)  \leq  |p|, 
$$
and by the Invariance Theorem \eqref{eq.it} 
$$
K(x|y)  \leq  K_T (x|y) + c_T,
$$
with $c_T >0$ a constant that may depend on $T$ but not on $x,y$.
Note that $T$ is fixed by the above construction.
On the way to constructing $E$ as required, we recall a construction for
the Shannon--Fano code:
\begin{lemma}\label{claim.shannon-fano}
If $p$ is a function
on the nonnegative integers, 
and $\sum_x p (x)  \leq  1$, then there is a binary prefix-code $e$
such that the code words $e(1),  e(2), \ldots $ can be
length-increasing lexicographically ordered
and $|e(x)|  \leq   \log 1/p (x) + 2$. 
\end{lemma}
\begin{proof}
Let $[0, 1)$ be the half-open real unit interval,
corresponding to the sample space $S =  \{ 0,1 \}^{\infty}$.
Each element $\omega$ of $S$ corresponds to a real
number $0. \omega$.
Let $x  \in   \{ 0, 1 \}^*$.
The half-open interval $[0.x, 0.x + 2^{-|x|} )$ corresponding
to the cylinder (set) of reals
$\Gamma_x =  \{ 0. \omega : \omega = x \ldots    \in  S \} $
is called a %
\it binary interval
\rm .
We cut off disjoint, consecutive,
adjacent (not necessarily binary) intervals
$I_x$ of length $p (x)$ from the left end of $[0, 1)$,
$x = 1, 2, \ldots \;. $ Let $i_x$ be the length of the longest
binary interval contained in $I_x$. Set $E(x)$ equal
to the binary word corresponding to the leftmost such interval.
Then
$|e(x)| =  \log 1/ i_x$.
It is easy to see that $I_x$ is covered by at most four
binary intervals of length $i_x$, from which the lemma follows.
\end{proof}
We use this construction to find
a prefix machine $T$ such that
$K_T (x|y)  \leq   \log 1/ {\bf m} (x|y) + c$.
That ${\bf m} (x|y)$ is not computable but
only lower semicomputable results in $c = 3$.

Since ${\bf m} (x|y)$ is lower semicomputable,
there is a partial recursive function $\phi (x,y,t)$
with $\phi (x,y,t)  \leq  {\bf m} (x|y)$ and
$\phi (x,y,t+1)  \geq  \phi (x,y,t)$, for all $t$.
Moreover, $\lim_{{t}  \rightarrow  \infty} \phi (x,y,t) = {\bf m} (x|y)$.
Let $\psi (x,y,t)$
be the greatest partial recursive lower bound of the following
special form on $\phi (x,y,t)$ defined by
$$
\psi (x,y,t) :=
 \{ 2^{-k}: 2^{-k} \leq  \phi (x,y,t) < 2 \cdot 2^{-k}
\mbox{ and } \phi (x,y,j) < 2^{-k} \mbox{ for all } j < t\},
$$
and $\psi (x,y,t) := 0$ otherwise.
Let $\psi$ enumerate its range without
repetition. Then,
\begin{eqnarray*}
\sum_{x,y,t}   \psi (x,y,t)   =  
\sum_x \sum_y   \sum_t   \psi (x,y,t) 
 \leq  2 {\bf m} (x|y) 
 \leq   2.
\end{eqnarray*}
The series $\sum_{x,y,t}  \psi (x,y,t)$ can converge
to precisely $2 {\bf m} (x|y)$ only in case there is
a positive integer $k$ such that ${\bf m} (x|y) = 2^{-k}$.

In a manner similar to the above proof
we chop off consecutive, adjacent, disjoint
half-open intervals $I_{x,y,t}$ of length $\psi (x,y,t)/2$,
in enumeration order of a dovetailed computation
of all $\psi (x,y,t)$, starting
from the left-hand side of $[0, 1)$. We have already shown that this
is possible. It is easy to see that we can
construct a prefix machine $T$
as follows: If $\Gamma_p$ is the leftmost largest binary interval
of $I_{x,y,t}$, then $T(p,y) = x$. Otherwise, $T(p,y) = \infty$
($T$ does not halt).

By construction of $\psi$, for each pair $x,y$ there is a $t$ such that
$\psi (x,y,t) > {\bf m} (x|y)/2$.
Each interval $I_{x,y,t}$ has length
$\psi (x,y,t)/2$. Each $I$-interval contains a binary interval $\Gamma_p$
of length at least one-half of that of $I$ (because the length of $I$
is of the form $2^{-k}$, it contains a binary interval
of length $2^{-k-1}$) . Therefore,
there is a $p$ with $T(p,y) = x$ such that $2^{-|p|}  \geq  {\bf m} (x|y)/8$.
This implies $K_T (x|y)  \leq   \log 1/ {\bf m} (x|y) + 3$, which was
what we had to prove.
\end{proof}
\begin{corollary}
\rm
The above result plus Corollary~\ref{PR1corollary} give:
If $P$ is a lower semicomputable conditional semiprobability mass function.
Then there is a constant $c_{P} = K (P ) + O(1)$
such that $K(x|y)  \leq   \log 1/ P(x|y) + c_{P}$. 
\end{corollary}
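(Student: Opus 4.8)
The plan is to obtain this corollary as a direct composition of the two results that immediately precede it: the multiplicative domination of Corollary~\ref{PR1corollary} and the Conditional Coding Theorem (Theorem~\ref{PR2}). The guiding idea is that domination converts any bound phrased in terms of ${\bf m}(x|y)$ into a bound in terms of an arbitrary competing $P(x|y)$, while the Coding Theorem identifies $\log 1/{\bf m}(x|y)$ with $K(x|y)$ up to an additive constant. Chaining the two gives the claim with essentially no new estimation.

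First I would invoke Corollary~\ref{PR1corollary}, which for a lower semicomputable conditional semiprobability mass function $P$ yields $2^{K(P)} {\bf m}(x|y) \geq P(x|y)$ for all $x,y$. Taking negative logarithms of both sides and rearranging gives
\[
\log \frac{1}{{\bf m}(x|y)} \leq \log \frac{1}{P(x|y)} + K(P),
\]
valid for all $x,y$. Next I would apply Theorem~\ref{PR2}, which asserts $K(x|y) = \log 1/{\bf m}(x|y) + O(1)$, and substitute it into the previous display to obtain
\[
K(x|y) \leq \log \frac{1}{P(x|y)} + K(P) + O(1),
\]
so that setting $c_P = K(P) + O(1)$ completes the argument.

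Rather than a genuine obstacle, the only thing requiring care is the bookkeeping of constants and the directions of the inequalities: the additive $O(1)$ slack coming from the Coding Theorem must be absorbed into $c_P$, and the monotonicity of $\log$ must be used in the correct orientation when passing from the multiplicative inequality to the additive one. I would also note for completeness that $K(P)$ is well defined precisely because $P$ occurs in the effective enumeration $P_1, P_2, \ldots$ constructed in Theorem~\ref{PR1}, so that $K(P)$ may be read as the prefix complexity of an index for $P$. With these points observed, the statement follows as a clean corollary of the two preceding results and needs no further computation.
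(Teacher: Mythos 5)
Your proposal is correct and is exactly the argument the paper intends: the corollary is stated as the direct composition of Theorem~\ref{PR2} with Corollary~\ref{PR1corollary}, and your chaining of the domination inequality (after taking logarithms) with the Conditional Coding Theorem, absorbing the $O(1)$ slack into $c_P$, is precisely that composition. No gap; your added remark that $K(P)$ is well defined via the effective enumeration of Theorem~\ref{PR1} is a sensible point of care that the paper leaves implicit.
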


\section{Conclusion}
The conditional version of the Coding Theorem 
of L.A. Levin, Theorem~\ref{PR2}, requires
a lower semicomputable conditional semiprobability that multiplicatively
dominates all other lower semicomputable conditional semiprobabilities as in
Theorem~\ref{PR1}. The conventional form of the conditional \eqref{eq.mxy1},
applied to the distribution \eqref{eq.muni} satisfying
the original Coding Theorem \eqref{eq.codth} is false. This is shown by
Theorems~\ref{theo.single} and \ref{theo.nocoding}.

\appendix
\subsection{Self-delimiting Code}\label{app.codes}
A binary string $y$
is a {\em proper prefix} of a binary string $x$
if we can write $x=yz$ for $z \neq \epsilon$.
 A set $\{x,y, \ldots \} \subseteq \{0,1\}^*$
is {\em prefix-free} if for any pair of distinct
elements in the set neither is a proper prefix of the other.
A prefix-free set is also called a {\em prefix code} and its
elements are called {\em code words}.
An example of a
prefix code, that is useful later,
encodes the source word $x=x_1 x_2 \ldots x_n$ by the code word
\[ \overline{x} = 1^n 0 x .\]
This prefix-free code
is called {\em self-delimiting}, because there is fixed computer program
associated with this code that can determine where the
code word $\bar x$ ends by reading it from left to right without
backing up. This way a composite code message can be parsed
in its constituent code words in one pass, by the computer program.
(This desirable property holds for every prefix-free
encoding of a finite set of source words, but not for every
prefix-free encoding of an infinite set of source words. For a single
finite computer program to be able to parse a code message the encoding needs
to have a certain uniformity property like the $\overline{x}$ code.)
Since we use the natural numbers and the binary strings interchangeably,
$|\bar x|$ where $x$ is ostensibly an integer, means the length
in bits of the self-delimiting code of the binary string with index $x$.
On the other hand, $\overline{|x|}$ where $x$ is ostensibly a binary
string, means the self-delimiting code of the binary string
with index the length $|x|$ of $x$.
Using this code we define
the standard self-delimiting code for $x$ to be
$x'=\overline{|x|}x$. It is easy to check that
$|\overline{x} | = 2 n+1$ and $|x'|=n+2 \log n +1$.
Let $\langle \cdot \rangle$ denote a standard invertible
effective one-one encoding from ${\cal N} \times {\cal N}$
to a subset of ${\cal N}$.
For example, we can set $\langle x,y \rangle = x'y$.
We can iterate this process to define
$\langle x , \langle y,z \rangle \rangle$,
and so on. For Kolmogorov complexity it is essential that there
exists a pairing function such that
the length of $\langle u,v \rangle$ is equal to the sum of
the lengths of $u,v$ plus a small value depending only on $|u|$.)

\subsection{Kolmogorov Complexity}\label{app.kolm}
For  precise definitions, notation, and results see the text \cite{LV08}.
For technical reasons we use a variant of complexity,
so-called prefix complexity, which is associated with Turing machines
for which the set of programs resulting in a halting computation
is prefix free.
We realize prefix complexity by considering a special type of Turing
machine with a one-way input tape, a separate work tape,
and a one-way output tape. Such Turing
machines are called {\em prefix} Turing machines. If a machine $T$ halts
with output $x$
after having scanned all of $p$ on the input tape,
but not further, then $T(p)=x$ and
we call $p$ a {\em program} for $T$.
It is easy to see that
$\{p : T(p)=x, x \in \{0,1\}^*\}$ is a {\em prefix code}.

Let $T_1 ,T_2 , \ldots$ be a standard enumeration
of all prefix Turing machines with a binary input tape,
for example the lexicographical length-increasing ordered 
prefix Turing machine descriptions \cite{LV08}.
Let $\phi_1 , \phi_2 , \ldots$
be the enumeration of corresponding prefix functions
that are computed by the respective prefix Turing machines
($T_i$ computes $\phi_i$).
These functions are the {\em partial recursive} functions
or {\em computable} functions (of effectively prefix-free encoded
arguments). 
We denote the function computed by a Turing machine $T_i$ with $p$ as input
and $y$ as conditional information by
$\phi_i(p,y)$.
One of the main achievements of the theory of computation
is that the enumeration $T_1,T_2, \ldots$ contains
a machine, say $T_u$, that is computationally universal and optimal in that it can
simulate the computation of every machine in the enumeration when
provided with its program and index. Namely, it computes a
function $\phi_u$ such that
   $\phi_u(\langle i, p\rangle,y)  = \phi_i (p,y)$
    for all $i,p,y$.
    We fix one such machine and designate it as the {\em reference universal
    Turing machine} or {\em reference Turing machine} for short.

\begin{definition}\label{def.KolmK}
    The {\em conditional prefix Kolmogorov complexity} of $x$ given $y$ (as
auxiliary information) {\em with respect to prefix Turing machine} $T_i$ is
                  \begin{equation}\label{eq.KC}
    K_i(x | y) = \min_p \{|p|: \phi_i(p,y)=x \}.
                  \end{equation}
The {\em conditional prefix Kolmogorov complexity} $K(x | y)$ is defined
as the conditional Kolmogorov complexity
$K_u (x | y)$ with respect to the reference prefix Turing machine $T_u$
usually denoted by $U$.
The {\em unconditional} version is set to  $K(x)=K(x  | \epsilon)$.
\end{definition}

The prefix Kolmogorov complexity $K(x| y)$ satisfies the following so-called Invariance
Theorem:
\begin{equation}\label{eq.it}
K(x| y)\le K_i(x | y)+c_i
\end{equation}
 for
all $i,x,y$, where $c_i$ depends only on
$i$ (asymptotically, the reference machine is not worse
than any other machine).
Intuitively, $K(x| y)$ represents the minimal amount of information
required to generate $x$ by any effective process from input $y$ (provided
the set of programs is prefix-free).
The functions $K( \cdot)$ and $K( \cdot |  \cdot)$,
though defined in terms of a
particular machine model, are machine-independent up to an additive
constant
 and acquire an asymptotically universal and absolute character
through Church's thesis, see for example \cite{LV08},
and from the ability of universal machines to
simulate one another and execute any effective process.

Quantitatively, $K(x) \leq |x|+2 \log |x|+O(1)$.
A prominent property of the prefix-freeness of $K(x)$ is
that we can interpret $2^{-K(x)}$
as a probability distribution since $K(x)$ is the length of
a shortest prefix-free program for $x$. By the fundamental
Kraft's inequality \cite{Kr49} (see for example \cite{CT91,LV08}) we know that
if $l_1 , l_2 , \ldots$ are the code-word lengths of a  prefix code,
then $\sum_x 2^{-l_x} \leq 1$. Hence,
\begin{equation}\label{eq.udconv}
\sum_x 2^{-K(x)} \leq 1.
\end{equation}
This leads to the notion
of universal distribution ${\bf m}(x)=2^{-K(x)}$ which we may
view as a rigorous form of Occam's razor. Namely, the probability
${\bf m}(x)$ is great if $x$ is simple ($K(x)$ is small like $K(x)=O(\log |x|)$)
and ${\bf m}(x)$ is small if $x$ is complex 
($K(x)$ is large like $K(x) \geq |x|$).

  The Kolmogorov complexity of an individual finite object was introduced by
Kolmogorov \cite{Ko65} as an absolute
and objective quantification of the amount of information in it.
The information theory of Shannon \cite{Sh48}, on the other hand,
deals with {\em average} information {\em to communicate}
objects produced by a {\em random source}.
  Since the former theory is much more precise, it is surprising that
analogs of theorems in information theory hold for
Kolmogorov complexity, be it in somewhat weaker form.
An example is the remarkable {\em symmetry of information} property
used later, see \cite{ZL70} for the plain complexity version,
and \cite{Ga74} for the prefix complexity version below.
Let $x^*$  denote the shortest prefix-free program $x^*$
for a finite string $x$,
or, if there are more than one of these, then $x^*$ is the first
one halting in a fixed standard enumeration of all halting programs.
Then, by definition, $K(x)=|x^*|$.
Denote $K(x,y)=K(\langle x,y \rangle)$. Then,
\begin{align}\label{eq.soi}
K(x,y) & = K(x)+K(y \mid x^*) + O(1) \\
& = K(y)+K(x \mid y^*)+O(1) .
\nonumber
\end{align}
\begin{remark}\label{rem.xKx}
\rm
The information contained in $x^*$ in the conditional above
is the same as the information in the pair $(x,K(x))$, up to an additive
constant, since there are recursive functions $f$ and $g$ such that
for all $x$ we have $f(x^*)=(x,K(x))$ and $g(x,K(x))=x^*$.
On input $x^*$,  the function $f$ computes $x=U(x^*)$ and $K(x)=|x^*|$;
and on input $x,K(x)$ the function $g$ runs all programs of length $K(x)$
simultaneously, round-robin fashion, until the first program computing
$x$ halts---this is by definition $x^*$.
\end{remark}

\subsection{Computability Notions}\label{app.complexity}

If a function has as values pairs of nonnegative integers,
such as $(a,b)$, then we can interpret this value as the rational $a/b$.
We assume the notion of a computable function with rational arguments
and values.
A function $f(x)$ with $x$ rational is \emph{semicomputable from below}
if it is defined by a rational-valued total computable function $\phi(x,k)$
 with $x$ a rational number
and $k$ a nonnegative integer
such that $\phi(x,k+1) \geq \phi(x,k)$ for every $k$ and
  $\lim_{k \rightarrow \infty} \phi (x,k)=f(x)$.
This means
that $f$ (with possibly real values)
 can be computably approximated arbitrary close from below
 (see \cite{LV08}, p. 35).  A function $f$ is  \emph{semicomputable
from above} if $-f$ is semicomputable from below.
 If a function is both semicomputable from below
and semicomputable from above then it is \emph{computable}.

We now consider a subclass of the lower semicomputable functions.
A function $f$ is a {\em semiprobability} mass function if
$\sum_x f(x) \leq 1$ and it is a {\em probability} mass function
if $\sum_x f(x) = 1$. It is customary to write $p(x)$ for
$f(x)$ if the function involved is a semiprobability mass function.

\subsection{Precision}\label{app.precision}
It is customary in this area to use ``additive constant $c$'' or
equivalently ``additive $O(1)$ term'' to mean a constant,
accounting for the length of a fixed binary program,
independent from every variable or parameter in the expression
in which it occurs. In this paper we use the prefix complexity
variant of Kolmogorov complexity for convenience.
Prefix
complexity of a string exceeds
the plain complexity of that string by at most an
additive term that is logarithmic in the length of that string.

\section*{Acknowledgements}
We thank Alexander Shen and P\'eter G\'acs for commenting on the manuscript.

\end{document}